\documentclass[10, conference]{IEEEtran}

\usepackage{amsmath, amsfonts, amsthm,  graphicx}
\usepackage{cite}

\newtheorem{theorem}{Theorem}
\newtheorem{lemma}{Lemma}


\newcounter{example}
\newenvironment{example}[1][]{\refstepcounter{example}\par\medskip\noindent%
   \textbf{Example~\theexample. }}{\medskip}



\newcommand{\set}[1]{\mathcal{#1}}

\long\def\symbolfootnote[#1]#2{\begingroup
\def\thefootnote{\fnsymbol{footnote}}\footnote[#1]{#2}\endgroup}

\title{Construction of Short Protocol Sequences with Worst-Case Throughput Guarantee}

\author{Kenneth W. Shum and Wing Shing Wong \\ Department of Information Engineering \\ The Chinese University of Hong Kong. \\
{\small \tt wkshum@inc.cuhk.edu.hk}, {\small \tt wswong@ie.cuhk.edu.hk}}

\begin{document}

\maketitle

\begin{abstract}
Protocol sequences are used in channel access for the multiple-access collision channel without feedback. A new construction of protocol sequences with a guarantee of worst-case system throughput is proposed. The construction is based on Chinese remainder theorem. The Hamming crosscorrelation is proved to be concentrated around the mean. The sequence period is much shorter than existing protocol sequences with the same throughput performance. The new construction reduces the complexity in implementation and also shortens the waiting time until a packet can be sent successfully.
\end{abstract}

{\it Tags:} Protocol sequences, collision channel without feedback, wobbling sequences.

\section{Introduction}
\symbolfootnote[0]{This work was partially supported by a grant from the Research Grants Council of the Hong Kong Special Administrative Region under Project 417909.}
Protocol sequences are periodic binary sequences for multiple-access control in the collision channel without feedback~\cite{Massey85}. In a time-slotted scenario, each user repeatedly reads out the value of a statically assigned protocol sequence, and sends a packet in a time slot if the sequence value is equal to one. If two or more users transmit simultaneously in the same time slot, there is a collision and the collided packets are assumed unrecoverable. If there is exactly one  transmitting user while the others remain silent, the received packet is assumed error-free. We do not assume any feedback from the receiver and any coordination among the transmitters. This assumption is applicable to low-cost and low-complexity wireless sensor networks, as it is not necessary to spare any hardware on monitoring the channel; the transmitters simply send a packet whenever the value of the assigned protocol sequence is one, regardless of the channel condition. For simplicity in presentation, we assume slot-synchronization in this paper. This requirement can be relaxed without much degradation in performance.

This channel model is considered in practical sensor networks, such as f-MAC~\cite{Roedig}. One of the main design issues is the construction of protocol sequences. Some design objectives are addressed in~\cite{SWSC09}. Since we do not assume any coordination among the users, the users may not start their protocol sequences at the same time. This incurs relative delay offsets among the transmitters. Our first objective is to design protocol sequences such that no matter what the relative delay offsets are, the system throughput is provably larger than some positive constant. This provides a  throughput guarantee in the worst-case.

The second objective is to minimize the sequence period, which measures the delay one has to wait until the promised number of successful packets go through the channel. For protocol sequences with very long period, a user may suffer starvation in the short term even though the throughput in the whole period is very good. This issue is alleviated if the sequence period is minimized.

The two objectives mentioned above are contradicting, and there is a tradeoff between them. In~\cite{Massey85, Rocha00, SWSC09}, a class of protocol sequences, called {\em shift-invariant} sequences are studied. This class of protocol sequences can achieve optimal system throughput, but the sequence period grows exponentially as a function of the users, and hence is not of  practical interests when the number of users is large. Another class of protocol sequences, called {\em wobbling sequences} is constructed in~\cite{Wong07}. The system throughput is provably larger than 0.25 for any choice of relative delay offsets, and the sequence period grows like $M^4$, where $M$ is for the number of users in the system. In this paper, we construct shorter protocol sequences with roughly the same throughput performance by the wobbling sequences.

Other constructions of protocol sequences are investigated in~\cite{GyorfiVajda93,  MZKZ95}, sometime under the name of {\em cyclically permutable constant weight codes} (CPCWC). The difference between CPCWC and our protocol sequences is that the latter only has Hamming crosscorrelation requirement, but the former has both autocorrelation and crosscorrelation constraints.
In~\cite{Chu06}, constructions using optical orthogonal codes and cyclic superimposed codes are considered.

In Section~\ref{sec:CRT}, the new construction of protocol sequences is described. In Section~\ref{sec:correlation}, we investigate the crosscorrelation properties, which are crucial in the derivation of a lower bound on system throughput in Section~\ref{sec:bound}. Comparison with shift-invariant and wobbling sequences is given in Section~\ref{sec:bound}.

\section{Construction of Protocol Sequences via Chinese Remainder Theorem} \label{sec:CRT}

We will use sequence ``period'' and sequence ``length'' interchangeably.
Let $\mathbb{Z}_n = \{0,1,\ldots, n-1\}$ be the residues of integers mod $n$. The components in a sequence of length $L$ are indexed by $\mathbb{Z}_L$. The {\em Hamming weight} of a binary sequence $a(t)$ of length $L$ is the number of ones in $a(t)$ in a period. For two binary sequences $a(t)$ and $b(t)$ of length $L$, their {\em Hamming crosscorrelation function} is defined as
\[
 H_{ab}(\tau) := \sum_{t\in\mathbb{Z}_L} a(t) b(t+\tau),
\]
where $\tau$ is the delay offset.

We shall construct sequences with length $L = pq$, where $p$ and $q$ are two relatively prime integers. In this paper, we will take $p$ to be a prime number and $q$ an positive integer not divisible by~$p$.
By the Chinese Remainder Theorem (CRT)~\cite{IrelandRosen}, there is a bijection between $\mathbb{Z}_{pq}$ and
the direct sum
\[
G_{p,q} := \mathbb{Z}_p \oplus \mathbb{Z}_q.
\]
The bijective mapping  $\Phi: \mathbb{Z}_{pq} \rightarrow G_{p,q}$ is given by
$$
 \Phi(x) = (x \bmod p, x \bmod q).
$$
Henceforth, we will identify $\mathbb{Z}_{pq}$ with $G_{p,q}$.

{\bf CRT Construction} Given a prime number $p$ and an integer $q$ relatively prime to $p$, we define a sequence $s_g(t)$ of length $pq$, for $g=0,1,2\ldots, p-1$, by
\[
 s_g(t) := \begin{cases}
 1 & \Phi(t) = (\bar{j}g, j) \text{ for some }j,\ 0\leq j < q, \\
 0 & \text{otherwise},
 \end{cases}
\]
where $\bar{j}$ is the residue of $j$ in $\mathbb{Z}_p$.
We call the sequence $s_g(t)$ the {\em CRT sequence generated by~$g$}. The integer $g$ is called the {\em generator} of $s_g(t)$.

Alternately, we can define the CRT sequences by specifying their characteristic sets. For each $g \in \{0,1,\ldots,p-1\}$,  let
\begin{equation}
\set{I}_{g}:=  \{ (\bar{j}g ,j) \in G_{p,q}:\, 0\leq j < q\}.
\label{eq:Ig}
\end{equation}
We note that
$\set{I}_{g}$ is an arithmetic progression in $G_{p,q}$ with common difference $(g,1)$. The CRT sequence with generator $g$ is obtained by setting $s_g(t) = 1$ if and only if $\Phi(t) \in \set{I}_g$.

The Hamming weight of each sequence is equal to~$q$.

\begin{example} $p=5$ and $q = 9$.
The five CRT sequences $s_0(t)$ to $s_4(t)$ are listed as follows. The common period is 45, and the Hamming weight of each sequence is equal to~9.
{\small
\begin{gather*}
s_0:\ 10000\,10000\,10000\,10000\,10000\,10000\,10000\,10000\,10000,\\
s_1:\ 11111\,11110\,00000\,00000\,00000\,00000\,00000\,00000\,00000,\\
s_2:\ 10000\,10000\,00010\,00000\,01000\,01001\,00001\,00100\,00100,\\
s_3:\ 1 0 0 0 0\, 1 0 0 0 0\, 0 1 0 0 0\, 0 1 0 0 0\, 0 0 1 0 0\, 0 0 0 1 0\, 0 0 0 1 0\, 0 0 0 0 1\, 0 0 0 0 1, \\
s_4:\ 1 0 0 0 0\, 1 0 0 0 0\, 0 0 1 0 0\, 0 0 1 0 1\, 0 0 0 0 1\, 0 0 0 0 0\, 0 1 0 0 0\, 0 0 0 1 0\, 0 0 0 1 0.
\end{gather*} }
\label{ex:A}
\end{example}

\vspace{-0.5cm}

\section{Crosscorrelation Properties} \label{sec:correlation}

The main idea of using CRT in constructing protocol sequences hinges on the fact that $\Phi$ is a homomorphism of abelian groups,
so that the analysis of crosscorrelation can be carried out in $G_{p,q}$ instead of $\mathbb{Z}_{pq}$.
We remark that in~\cite{Nguyen92, GyorfiVajda93, MZKZ95, SWSC09}, the idea of CRT appears in the same way as in this paper.

Given a one-dimensional delay $\tau$, we denote its two-dimensional counterpart by $(\tau_1, \tau_2) := \Phi(\tau)$. For $h \in\mathbb{Z}_p$, define the translation $\set{I}_h+(\tau_1, \tau_2)$ of $\set{I}_h$ by $(\tau_1,\tau_2)$ by
\[
\{(x+\tau_1,y+\tau_2)\in G_{p,q} :\, (x,y) \in \set{I}_h \}.
\]
The Hamming crosscorrelation $H_{gh}(\tau)$ can be computed by
\[
H_{gh}(\tau) = |I_g \cap (I_h-(\tau_1, \tau_2))|,
\]
where $|\mathcal{A}|$ indicates the cardinality of a set~$\mathcal{A}$. To distinguish arithmetic operations in $\mathbb{Z}_p$ and $\mathbb{Z}_q$, we use $\oplus_p$ and $\ominus_p$ for addition and subtraction in $\mathbb{Z}_p$, and $\oplus_q$ and $\ominus_q$ for addition and subtraction in $\mathbb{Z}_q$. In this notation, we have
\[
 \set{I}_h +(\tau_1,\tau_2) = \{(\bar{j} h \oplus_p \tau_1, j \oplus_q \tau_2):\, j=0,1,\ldots, q-1  \}.
\]
By a change of variable, $ \set{I}_h +(\tau_1,\tau_2) $ can be written as
\[
 \{ ((( \overline{j \ominus_q \tau_2})h) \oplus_p \tau_1, j):\, j=0,1,\ldots, q-1\}.
\]
After comparing with the definition of $\set{I}_g$ in~\eqref{eq:Ig}, we see that $|\set{I}_g \cap (\set{I}_h + (\tau_1, \tau_2))|$ is equal to the number of solutions to
\begin{equation}
 \bar{x}g  \equiv ((\overline{x \ominus_q \tau_2})h) \oplus_p  \tau_1 \bmod p. \label{eq:Ham}
\end{equation}
for $x = 0,1,\ldots, q-1$. The problem of computing the crosscorrelation function is thus reduced to counting the solutions to~\eqref{eq:Ham}.

The following simple lemma is useful in the derivation of Hamming crosscorrelation.

\begin{lemma} Let $p$ be a prime number. For each $b \in\mathbb{Z}_p$, the number of solutions to
$
\bar{x} \equiv b \bmod p
$
for $x$ going through $d$ consecutive integers $c, c+1, \ldots c+d-1$, equals
\[
  \begin{cases}
   d/p  & \text{if $p$ divides $d$}, \\
  \lfloor d/p \rfloor + \delta & \text{otherwise,}
   \end{cases}
\]
where $\delta$ is either 0 or~1. \label{lemma:simple}
\end{lemma}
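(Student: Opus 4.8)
The plan is to observe that $\bar{x}$ is by definition the residue $x \bmod p$, so the congruence $\bar{x}\equiv b\bmod p$ is equivalent to $x\equiv b\bmod p$. Thus the quantity to be computed is simply the number of integers in the window $\{c,c+1,\ldots,c+d-1\}$ that lie in the residue class $b$ modulo $p$. This reduces the lemma to an elementary counting statement about how one fixed residue class meets a block of $d$ consecutive integers.

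First I would apply Euclidean division to write $d=kp+r$ with $k=\lfloor d/p\rfloor$ and $0\le r<p$. I would then split the window of $d$ consecutive integers into $k$ consecutive blocks of length $p$, followed by a final block of length $r$. The key point is that any $p$ consecutive integers form a complete system of residues modulo $p$, so each of the $k$ full blocks contains exactly one integer congruent to $b$, contributing $k$ in total.

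Next I would handle the residual block of $r<p$ consecutive integers. Since its length is strictly less than $p$, it can contain at most one representative of the class $b$, so it contributes either $0$ or $1$; call this contribution $\delta\in\{0,1\}$. Adding the two contributions gives $k+\delta=\lfloor d/p\rfloor+\delta$. In the special case $p\mid d$ we have $r=0$, the residual block is empty, $\delta=0$, and the count is exactly $d/p$, which yields the first case of the statement.

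The argument involves no substantial obstacle; the only point requiring a moment's care is the claim that a run of fewer than $p$ consecutive integers meets a fixed residue class at most once, which follows immediately from the fact that two integers in the same class modulo $p$ differ by at least $p$. The precise value of $\delta$ depends on the starting point $c$ relative to $b$, but since the lemma asserts only $\delta\in\{0,1\}$, there is no need to pin it down.
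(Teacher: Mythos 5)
Your proof is correct and follows essentially the same route as the paper: split the $d$ consecutive integers into $\lfloor d/p\rfloor$ full blocks of $p$ consecutive integers (each a complete residue system mod $p$, contributing exactly one hit apiece) plus a residual run of fewer than $p$ integers, which has distinct residues and hence contributes $\delta\in\{0,1\}$. The only cosmetic difference is that the paper lumps the full blocks into a single segment of length $\lfloor d/p\rfloor p$ rather than treating them block by block, which changes nothing in substance.
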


\begin{proof}
In the first case where $d$ is divisible by~$p$, if we reduce the integers $c$, $c+1, \ldots, c+d-1$ mod $p$, we have each element in $\mathbb{Z}_p$ repeated exactly $d/p$ times. Hence, for each $b \in \mathbb{Z}_p$, there are exactly $d/p$ integers in $\{c, \ldots, c+d-1\}$ whose residue mod $p$ equal~$b$

For the second case, where $d$ is not divisible by~$p$, we divide the $d$ consecutive integers into two parts. The first part consists of $\lfloor d/p \rfloor p$ consecutive integers and the second part consists of the last $d-\lfloor d/p \rfloor p$ integers. Among the first $\lfloor d/p \rfloor p$ integers, exactly $\lfloor d/p \rfloor$ of them equal $b$ mod $p$. The residues of the $d - \lfloor d/p \rfloor p$ integers in the second part are distinct, and hence at most one of them is equal to $b$. The number of integers in $\{c, \ldots, c+d-1\}$ whose residue equal $b$ is either $\lfloor d/p \rfloor$ or $\lfloor d/p \rfloor + 1$.
\end{proof}

From Lemma~\ref{lemma:simple},
we obtain the crosscorrelation between $s_0(t)$ and other sequences.

\begin{theorem}
For $g\neq 0$, the Hamming crosscorrelation of $s_g(t)$ and $s_0(t)$ is equal to either $\lfloor q/p \rfloor$ or $\lfloor q/p \rfloor +1$. \label{thm:cross-correlation0}
\end{theorem}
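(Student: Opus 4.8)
The plan is to specialize the solution-counting reduction derived just above to the case $h=0$ and then invoke Lemma~\ref{lemma:simple}. The crosscorrelation of $s_g(t)$ and $s_0(t)$ is $H_{g0}(\tau)$, which by that reduction equals the number of $x \in \{0,1,\ldots,q-1\}$ satisfying~\eqref{eq:Ham} with $h=0$. The key simplification is that substituting $h=0$ annihilates the term $((\overline{x \ominus_q \tau_2})h)$ on the right-hand side, so that~\eqref{eq:Ham} collapses to the single linear congruence $\bar{x} g \equiv \tau_1 \bmod p$, in which the offset $\tau_2$ no longer appears.

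Next I would use the hypothesis $g \neq 0$ together with the primality of $p$: since $g$ is then a unit in $\mathbb{Z}_p$, I can multiply through by its inverse to rewrite the congruence as $\bar{x} \equiv g^{-1}\tau_1 \bmod p$. This is precisely the form $\bar{x} \equiv b \bmod p$ addressed by Lemma~\ref{lemma:simple}, with $b := g^{-1}\tau_1$, and with $x$ ranging over the $d = q$ consecutive integers $0, 1, \ldots, q-1$ (so $c = 0$).

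Finally I would observe that, because $q$ is relatively prime to the prime $p$, we have $p \nmid q$; hence only the second case of Lemma~\ref{lemma:simple} can occur, which forces the number of solutions to be either $\lfloor q/p \rfloor$ or $\lfloor q/p \rfloor + 1$. As this argument is uniform in $\tau$, the conclusion holds for every delay offset, establishing the theorem.

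I expect essentially no serious obstacle here: the statement is a clean corollary of the counting lemma once the reduction is in hand. The only points requiring care are bookkeeping ones, namely correctly identifying that it is the generator $h=0$ (rather than $g$) whose contribution vanishes in~\eqref{eq:Ham}, and verifying that $\gcd(p,q)=1$ genuinely excludes the divisible case of the lemma so that the count can never be pinned to the exact value $q/p$.
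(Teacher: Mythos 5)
Your proof is correct and follows essentially the same route as the paper's: set $h=0$ in~\eqref{eq:Ham} to collapse it to $\bar{x} \equiv g^{-1}\tau_1 \bmod p$, then apply Lemma~\ref{lemma:simple} with $d=q$ consecutive integers. Your extra remark that $\gcd(p,q)=1$ rules out the divisible case of the lemma is a minor point the paper leaves implicit, but it is the same argument.
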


\begin{proof}
If we put  $h=0$ in~\eqref{eq:Ham}, we get
$  \bar{x} \equiv g^{-1} \tau_1 \bmod p$.
The number of integers in $\{0,1,\ldots, q-1\}$ which equal $g^{-1} \tau_1 \bmod p$ is either $\lfloor q/p \rfloor$ or $\lfloor q/p \rfloor +1$
by Lemma~\ref{lemma:simple}.
\end{proof}

For nonzero $h$, we can divide both sides of \eqref{eq:Ham} by $h$ and re-write it as
\[
  \bar{x} (h^{-1} g) \equiv (\overline{x \ominus_q \tau_2}) \oplus_p (h^{-1} \tau_1) \bmod p.
\]
For each fixed $\tau_2$, as $\tau_1$ runs through $\mathbb{Z}_p$, $h^{-1} \tau_1$ also runs through the complete set of residues mod~$p$. Therefore, the distribution of Hamming crosscorrelation between $s_g(t)$ and $s_h(t)$ is the same as the distribution of Hamming cross-correlation between $s_{g/h}(t)$ and $s_1(t)$. We henceforth focus on the case $h=1$ without any loss of generality.

To aid the derivation of the Hamming crosscorrelation, we first prove the following lemma.

\begin{lemma} Let $g \in \mathbb{Z}_p \setminus \{1\}$, and denote $\Phi(\tau)$ by $(\tau_1, \tau_2)$. The Hamming crosscorrelation between $s_g(t)$ and $s_1(t)$, namely $H_{g1}(\tau)$, satisfies the following properties:

\begin{enumerate}
\item $H_{g1}(\tau)$ equals the number of solutions to
\begin{equation}
\bar{x} \equiv a_g(\tau_1,\tau_2) + b_g \mathbf{I}(0\leq x < \tau_2) \bmod p,
\label{eq:Ham4}
\end{equation}
for $x= 0,1,\ldots, q-1$, where
\begin{align}
a_g(\tau_1,\tau_2) &:= (g-1)^{-1}(\tau_1 - \bar{\tau}_2), \\
 b_g &:= (g-1)^{-1} \bar{q},  \label{eq:def_b}
\end{align}
and $\mathbf{I}$ is the indicator function defined as
\[
\mathbf{I}(P) := \begin{cases}
1 & \text{if $P$ is true}, \\
0 & \text{if $P$ is false}.
\end{cases}
\]
\label{lemma:H2}

\item Let $\tau$ and $\tau'$ denote two relative delay offsets. Suppose that the first component of $\Phi(\tau)$ and $\Phi(\tau')$ are the same, and the second component of $\Phi(\tau)$ and $\Phi(\tau')$ defer by a multiple of $p$, then then
  $H_{g1}(\tau) = H_{g1}(\tau')$.

\label{lemma:H3}
\end{enumerate}
    \label{lemma:H}
\end{lemma}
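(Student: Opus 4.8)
The plan is to start from the reduced counting problem in \eqref{eq:Ham} specialized to $h=1$, namely $\bar{x}g \equiv (\overline{x\ominus_q\tau_2})\oplus_p\tau_1 \pmod p$, and to make the wrap-around hidden in the $\ominus_q$ operation explicit. For $x\in\{0,1,\ldots,q-1\}$ and $\tau_2\in\{0,1,\ldots,q-1\}$, the integer $x\ominus_q\tau_2$ equals $x-\tau_2$ when $x\geq\tau_2$ and equals $x-\tau_2+q$ when $0\leq x<\tau_2$; in a single formula, $x\ominus_q\tau_2 = x-\tau_2+q\,\mathbf{I}(0\leq x<\tau_2)$. Reducing this integer mod $p$ is exactly where the indicator term in \eqref{eq:Ham4} is going to come from.

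For part (1), I would substitute this expression into \eqref{eq:Ham} with $h=1$. Since $\bar{x}g\equiv xg$ and $\overline{x\ominus_q\tau_2}\equiv x-\tau_2+q\,\mathbf{I}(0\leq x<\tau_2)$ modulo $p$, the congruence collapses to $x(g-1)\equiv(\tau_1-\tau_2)+q\,\mathbf{I}(0\leq x<\tau_2)\pmod p$. Because $p$ is prime and $g\neq 1$, the element $g-1$ is invertible in $\mathbb{Z}_p$, so multiplying through by $(g-1)^{-1}$ yields $\bar{x}\equiv(g-1)^{-1}(\tau_1-\bar{\tau}_2)+(g-1)^{-1}\bar{q}\,\mathbf{I}(0\leq x<\tau_2)\pmod p$, which is precisely \eqref{eq:Ham4} with the stated $a_g$ and $b_g$. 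This direction is mostly careful bookkeeping, and the count of solutions is preserved at every step because we only rewrite the congruence, never the index range $x=0,1,\ldots,q-1$.

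For part (2), I would use part (1) together with the fact that $a_g$ depends on $\tau_2$ only through $\bar{\tau}_2$. If $\Phi(\tau)=(\tau_1,\tau_2)$ and $\Phi(\tau')=(\tau_1,\tau_2')$ with $\tau_2'-\tau_2=kp$ for some integer $k$ (take $k\geq 0$ without loss of generality by swapping the roles of $\tau$ and $\tau'$), then $\bar{\tau}_2=\bar{\tau}_2'$, so $a_g$ and $b_g$ are identical for the two offsets and only the threshold of the indicator moves from $\tau_2$ to $\tau_2'$. Hence the two instances of \eqref{eq:Ham4} agree for every $x$ outside the block $\tau_2\leq x<\tau_2'$, while on that block of $kp$ consecutive integers one congruence reads $\bar{x}\equiv a_g$ and the other reads $\bar{x}\equiv a_g+b_g$. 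By the first case of Lemma~\ref{lemma:simple} applied with $d=kp$, each residue class mod $p$ occurs exactly $k$ times among $kp$ consecutive integers, so each of these two congruences contributes exactly $k$ solutions on the block. The contributions are equal, so the total solution counts coincide and $H_{g1}(\tau)=H_{g1}(\tau')$.

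The main obstacle I anticipate is not any single hard step but the discipline of the two-modulus bookkeeping in part (1): the same symbols $\tau_2$ and $q$ play a role both as integers bounding the summation index (mod $q$) and as residues inside the congruence (mod $p$), and the indicator term must be attributed cleanly to the wrap-around of $\ominus_q$ rather than to anything happening mod $p$. Once the case split $x<\tau_2$ versus $x\geq\tau_2$ is written out explicitly, both parts follow without further difficulty, with part (2) reducing to the equidistribution already packaged in Lemma~\ref{lemma:simple}.
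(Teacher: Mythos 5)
Your proposal is correct and follows essentially the same route as the paper: part (1) by making the wrap-around of $\ominus_q$ explicit via the indicator and dividing by the invertible element $g-1$, and part (2) by comparing the two congruences outside and inside the block where the indicator thresholds differ, invoking Lemma~\ref{lemma:simple} for the exact count on that block. The only cosmetic difference is that you treat a general offset $kp$ in one step (a block of $kp$ consecutive integers giving exactly $k$ solutions to each congruence), whereas the paper reduces to the single case $\tau_2' = \tau_2 + p$ and uses a three-part split $\mathcal{X}_1, \mathcal{X}_2, \mathcal{X}_3$; both rest on the same equidistribution fact.
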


\begin{proof}
After setting $h$ in \eqref{eq:Ham} to 1, we obtain
\begin{equation}
 \bar{x} g \equiv  (\overline{x \ominus_q \tau_2}) \oplus_p \tau_1 \bmod p.
 \label{eq:Ham_app}
\end{equation}
We want to show that the number of solutions to~\eqref{eq:Ham_app}, for $q=0,1,\ldots, q-1$, is the same as the number of solutions to~\eqref{eq:Ham4}.

We consider $x$ in two disjoint ranges: (i) $0 \leq x <\tau_2$, and (ii) $\tau_2 \leq x < q$. In the first case, $x \ominus_q \tau_2$ is congruent to $x+q-\tau_2 \bmod q$. So, for $0\leq x <  \tau_2$, \eqref{eq:Ham_app} is equivalent to
\begin{equation}
 \bar{x}g \equiv \bar{x}+\bar{q}-\bar{\tau}_2+\tau_1 \bmod p
\label{eq:Ham2}
\end{equation}
where  $\bar{q}$ and $\bar{\tau}_2$ are residues of  $q$ and $\tau_2$ in $\mathbb{Z}_p$, respectively.

In the second case, for $x=\tau_2, \tau_2+1, \ldots, q-1$, \eqref{eq:Ham_app} is equivalent to
\begin{equation}
\bar{x}g \equiv \bar{x}-\bar{\tau}_2+\tau_1 \bmod p.
\label{eq:Ham3}
\end{equation}

We combine \eqref{eq:Ham2} and \eqref{eq:Ham3} in one line as
\[
\bar{x}(g-1) \equiv  -\bar{\tau}_2+\tau_1 + \bar{q} \mathbf{I}(0\leq x < \tau_2) \bmod p,
\]
Since $g$ is not equal to 1 by assumption, we can divide by $(g-1)$ and obtain~\eqref{eq:Ham4}. This proves the first part of the lemma.

The second part of the lemma is vacuous if $q<p$. So we assume $q>p$. (The case $q=p$ is excluded because it is assumed that $q$ is relatively prime with $p$.) Let $(\tau_1,\tau_2) = \Phi(\tau)$ and $(\tau_1',\tau_2') = \Phi(\tau')$.
It is sufficient to prove the statement for $\tau_1 = \tau_1'$ and $\tau_2' = \tau_2+p$, namely, the number of solutions to~\eqref{eq:Ham4} and the number of solutions to
\begin{equation}
\bar{x} \equiv a_g(\tau_1, \tau_2') + b_g \mathbf{I}(0\leq x <\tau_2') \bmod p \label{eq:LemmaB}
\end{equation}
for $x=0,1,\ldots, q-1$, are the same. We note that $a_g(\tau_1, \tau_2)$ is equal to $a_g(\tau_1, \tau_2')$.  However, the arguments inside the indicator function are different. We divide the range of $x$ into three disjoint parts:
\begin{align*}
\set{X}_1 &:= \{0,1,\ldots, \tau_2-1\}, \\
\set{X}_2 &:= \{\tau_2,\tau_2+1,\ldots, \tau_2+p-1\}, \\
\set{X}_3 &:= \{\tau_2+p,\tau_2+p+1,\ldots, q-1\}.
\end{align*}
Since $q>p$, $\set{X}_3$ is non-empty.
For $x \in \set{X}_1$, $\mathbf{I}(0\leq x < \tau_2) = \mathbf{I}(0\leq x < \tau_2')$.
Therefore
\eqref{eq:Ham4} and~\eqref{eq:LemmaB} have the same number of solutions for $x$ in $\set{X}_1$. For $x\in \set{X}_2$, both \eqref{eq:Ham4} and \eqref{eq:LemmaB}
have exactly one solution by Lemma~\ref{lemma:simple}.
For $x \in \set{X}_3$, \eqref{eq:Ham4} is equivalent to \eqref{eq:LemmaB}, and hence has the same number of solutions as~\eqref{eq:LemmaB} does. In conclusion, the number of solutions to \eqref{eq:Ham4} and \eqref{eq:LemmaB} for $x \in\mathcal{X}_1\cup\mathcal{X}_2\cup\mathcal{X}_3$ are the same. This finishes the proof of the second part of the lemma.
\end{proof}

From now on, we assume that $q>p$, which is the case of practical interest.

\begin{theorem} Let $p$ and $q$ be positive integers such that $p$ is prime, $\gcd(p,q)=1$ and $q>p$. Let $m$ denote the quotient of $q$ divided by~$p$, i.e., $m=\lfloor q/p \rfloor$, and let $g\in\mathbb{Z}_p$, $0 \neq g\neq 1$. Let $\bar{q}$ be the residue of $q$ mod $p$, and $b_g$ be defined as in~\eqref{eq:def_b}.
The Hamming crosscorrelation between $s_g(t)$ and $s_1(t)$ is bounded between
\begin{align}
m-1 \text{ and } m+1 & \text{    if } 0 < b_g < p-\bar{q}, \text{ or } \label{thm:case1} \\
m \text{ and } m+2 & \text{    if } p-\bar{q} < b_g < p. \label{thm:case2}
\end{align}
\label{thm:cross-correlation1}
\end{theorem}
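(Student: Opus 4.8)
The plan is to work directly from the first part of Lemma~\ref{lemma:H}, which says that $H_{g1}(\tau)$ equals the number of $x\in\{0,1,\ldots,q-1\}$ solving~\eqref{eq:Ham4}. The indicator $\mathbf{I}(0\le x<\tau_2)$ turns the right-hand side of~\eqref{eq:Ham4} into $a_g+b_g$ on the low range $\set{X}:=\{0,\ldots,\tau_2-1\}$ and into $a_g$ on the complementary range $\{\tau_2,\ldots,q-1\}$, so I would split the count as $H_{g1}(\tau)=N_1+N_0$, where $N_1$ counts $x\in\set{X}$ with $\bar{x}\equiv a_g+b_g\bmod p$ and $N_0$ counts $x\in\{\tau_2,\ldots,q-1\}$ with $\bar{x}\equiv a_g\bmod p$. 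Writing $N_0$ as the full-range count minus the count on $\set{X}$, this rearranges to
\[
H_{g1}(\tau)=C_{\mathrm{full}}(a_g)+\Delta,\qquad \Delta:=C_{\set{X}}(a_g+b_g)-C_{\set{X}}(a_g),
\]
where $C_{\mathrm{full}}(c)$ and $C_{\set{X}}(c)$ denote the number of $x$ with $\bar{x}\equiv c\bmod p$ in $\{0,\ldots,q-1\}$ and in $\set{X}$, respectively. The point of this rewriting is that it isolates the entire delay-dependence of the extreme behaviour into the single boundary term $\Delta$.

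Next I would evaluate the two pieces with Lemma~\ref{lemma:simple}. For the full range of $q=mp+\bar{q}$ consecutive integers, the leftover block $\{mp,\ldots,mp+\bar{q}-1\}$ carries residues $0,1,\ldots,\bar{q}-1$, so $C_{\mathrm{full}}(a_g)=m+\mathbf{I}(a_g<\bar{q})\in\{m,m+1\}$. For $\set{X}$, Lemma~\ref{lemma:simple} gives $C_{\set{X}}(c)=\lfloor \tau_2/p\rfloor+\mathbf{I}(c<\bar{\tau}_2)$, whence $\Delta\in\{-1,0,1\}$. The refinement that drives the theorem is a wrap-around dichotomy: I would show that $\Delta=+1$ forces $a_g+b_g\ge p$ (equivalently $a_g\ge p-b_g$), while $\Delta=-1$ forces $a_g+b_g<p$ (equivalently $a_g<p-b_g$). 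Both follow from $\Delta=\mathbf{I}(\rho<\bar{\tau}_2)-\mathbf{I}(a_g<\bar{\tau}_2)$ with $\rho:=(a_g+b_g)\bmod p$: if $\Delta=+1$ then $\rho<\bar{\tau}_2\le a_g$, which is impossible unless $a_g+b_g$ has wrapped modulo $p$, and symmetrically $\Delta=-1$ gives $a_g<\bar{\tau}_2\le\rho$, impossible under a wrap since then $\rho=a_g+b_g-p<a_g$.

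Combining the two evaluations gives the crude envelope $m-1\le H_{g1}(\tau)\le m+2$, and the sharpening in the statement comes from eliminating one extreme in each case. For~\eqref{thm:case1}, suppose $H_{g1}(\tau)=m+2$; then $C_{\mathrm{full}}(a_g)=m+1$ and $\Delta=+1$, so $a_g<\bar{q}$ and $a_g\ge p-b_g$, forcing $p-b_g<\bar{q}$, i.e.\ $b_g>p-\bar{q}$, contradicting $0<b_g<p-\bar{q}$. For~\eqref{thm:case2}, suppose $H_{g1}(\tau)=m-1$; then $C_{\mathrm{full}}(a_g)=m$ and $\Delta=-1$, so $a_g\ge\bar{q}$ and $a_g<p-b_g$, forcing $b_g<p-\bar{q}$, contradicting $p-\bar{q}<b_g$. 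Since $a_g$ and $\bar{\tau}_2$ range independently over $\mathbb{Z}_p$ as $\tau$ varies, the same analysis in fact exhibits delays attaining both endpoints, so the bounds are tight.

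I expect the main obstacle to be the bookkeeping in the wrap-around dichotomy for $\Delta$: one must keep the residue reduction $\rho$ of $a_g+b_g$ and the threshold $\bar{\tau}_2$ straight and pin down the sign of $\Delta$ exactly, because a slip there would blur precisely the distinction that separates the two cases. A small but necessary closing observation is that the boundary value $b_g=p-\bar{q}$ never occurs, so the two cases of the theorem are exhaustive over admissible $g$: from~\eqref{eq:def_b}, $b_g\equiv(g-1)^{-1}\bar{q}$, and $b_g=p-\bar{q}\equiv-\bar{q}$ would force $(g-1)^{-1}\equiv-1$ and hence $g\equiv0$, which is excluded.
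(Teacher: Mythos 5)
Your proof is correct, but it takes a genuinely different route from the paper's. Both arguments start from part 1 of Lemma~\ref{lemma:H}, i.e., from counting the solutions of~\eqref{eq:Ham4}; the paper then invokes part 2 of Lemma~\ref{lemma:H} to reduce to $\tau_2\in\{0,1,\ldots,p-1\}$ and runs a four-way case analysis ($b_g$ small or large, crossed with $\tau_2<\bar{q}$ or $\tau_2\ge\bar{q}$), in each dangerous sub-case using only the weak form of Lemma~\ref{lemma:simple} ($\delta\in\{0,1\}$) together with a separate contradiction argument to rule out the extreme value ($m+2$ in case~\eqref{thm:case1}, $m-1$ in case~\eqref{thm:case2}). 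You instead derive exact counts, $C_{\mathrm{full}}(c)=m+\mathbf{I}(c<\bar{q})$ and $C_{\set{X}}(c)=\lfloor\tau_2/p\rfloor+\mathbf{I}(c<\bar{\tau}_2)$, so that $H_{g1}(\tau)=C_{\mathrm{full}}(a_g)+\Delta$ with $\Delta\in\{-1,0,1\}$, and your wrap-around dichotomy for $\Delta$ collapses the whole case analysis into two one-line contradictions. This buys several things the paper's proof does not give: you never need part 2 of Lemma~\ref{lemma:H}, since the formula handles arbitrary $\tau_2\in\mathbb{Z}_q$ through $\bar{\tau}_2$; you get tightness of both endpoints; and you get exhaustiveness of the two cases via the observation that $b_g=p-\bar{q}$ would force $g\equiv 0$, a point the paper leaves implicit. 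One small repair: the exact formula $C_{\set{X}}(c)=\lfloor\tau_2/p\rfloor+\mathbf{I}(c<\bar{\tau}_2)$ is not literally Lemma~\ref{lemma:simple}, which leaves $\delta$ unspecified; it holds because $\set{X}$ starts at $0$, so the leftover partial block $\{\lfloor\tau_2/p\rfloor p,\ldots,\tau_2-1\}$ carries exactly the residues $0,1,\ldots,\bar{\tau}_2-1$. State that one line explicitly (as you did for the full range) instead of attributing it to the lemma.
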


\begin{proof}
By the second part of the previous lemma, we only need to consider $\tau_2 = 0,1,\ldots, p-1$. In this proof, we will denote $H_{g1}(\tau)$ by $H_{g1}(\tau_1, \tau_2)$, with $(\tau_1, \tau_2)$ equal to $\Phi(\tau)$.

We first prove the first case in~\eqref{thm:case1}  by considering two cases.

\underline{Case 1,  $0\leq \tau_2 < \bar{q}$}

Suppose  that~\eqref{eq:Ham4} has no solution for $0\leq x< \tau_2$. As the indicator function in~\eqref{eq:Ham4} is zero for $x = \tau_2$, $\tau_2+1, \ldots, q-1$, \eqref{eq:Ham4} is reduced to
\[
 \bar{x} \equiv a_g(\tau_1, \tau_2) \bmod p.
\]
The number of integers in $\{ \tau_2$, $\tau_2+1, \ldots, q-1\}$, say $d$, satisfies $\lfloor d/p \rfloor = m$.
By Lemma~\ref{lemma:simple}, we have either $m$ or $m+1$ solutions to~\eqref{eq:Ham4}  for $x \geq \tau_2$.

Secondly, suppose that~\eqref{eq:Ham4} has exactly one solution for $0\leq x< \tau_2$. The indicator function in~\eqref{eq:Ham4} is equal to~1 for $0\leq x < \tau_2$. Hence,
\begin{equation}0\leq a_g(\tau_1,\tau_2)+b_g < \tau_2. \label{eq:contradiction1}
 \end{equation}
We claim that \eqref{eq:Ham4} has no solution for $x=\tau_2, \tau_2+1, \ldots, \bar{q}-1$. Otherwise, we have
\[
\tau_2 \leq a_g(\tau_1,\tau_2) < \bar{q},
\]
which, after combining with the assumption that $1 \leq b_g \leq p - \bar{q}-1$, yields
\[
\tau_2 < a_g(\tau_1,\tau_2) + b_g < p-1.
\]
This contradicts with~\eqref{eq:contradiction1} and proves the claim. For $$\bar{q}\leq x < q,$$ there are exactly $m$ solutions by Lemma~\ref{lemma:simple}. The total number of solutions to~\eqref{eq:Ham4} for $x=0,1,\ldots, q-1$, is thus $m+1$. Hence $H_{g1}(\tau_1, \tau_2) = m+1$.

\underline{Case 2:  $\bar{q} \leq \tau_2 < p$}

By Lemma~\ref{lemma:simple}, \eqref{eq:Ham4} has either 0 or 1 solution for $0\leq x \leq \tau_2$, and either $m-1$ or $m$ solutions for $\tau_2 \leq x < q$. Hence, $H_{g1}(\tau_1, \tau_2)$ is within the range of $\{m-1, m, m+1\}$.

\medskip

For $b_g = p-\bar{q}+1, \ldots, p-1$, we again consider two cases.

\underline{Case 1: $0 \leq \tau_2 < \bar{q}$}

By Lemma~\ref{lemma:simple}, \eqref{eq:Ham4} has either 0 or 1 solution for $0\leq x < \tau_2$, and either $m$ or $m+1$ solutions for $\tau_2 \leq x < q$. Therefore, $H_{g1}(\tau_1, \tau_2) \in \{m, m+1, m+2\}$.

\underline{\em Case 2:  $\bar{q} \leq \tau_2 < p$}

Suppose that \eqref{eq:Ham4} has no solution for $0 \leq x < \tau_2$, i.e.,
\begin{equation}
\tau_2 \leq a_g(\tau_1, \tau_2) + b_g < p. \label{eq:contradiction2}
\end{equation}
We claim that \eqref{eq:Ham4} must have one solution for $x$ in the following range
\begin{equation}
\tau_2 \leq x < p+\bar{q}. \label{eq:range}
\end{equation}
From the assumption of $\bar{q} \leq \tau_2 < p$, we deduce that
$$\bar{q}<p+\bar{q}-\tau_2 \leq p,$$ so that the range in~\eqref{eq:range} is non-empty and consists of no more than $p$ integers.
If the claim were false, we would have no solution to~\eqref{eq:Ham4} for $\tau_2 \leq x < p+\bar{q}$, implying that
\begin{equation}
\bar{q} \leq a_g(\tau_1, \tau_2) < \tau_2. \label{eq:appA}
\end{equation}
Here, we have used the fact that the indicator function in~\eqref{eq:Ham4} is equal to zero for $x$ in the range in~\eqref{eq:range}.
By adding~\eqref{eq:appA} to $$p-\bar{q}+1\leq b_g \leq p-1$$ and reducing mod $p$, we obtain
\[
1 \leq  a_g(\tau_1, \tau_2) + b_g < \tau_2,
\]
which is a contradiction to~\eqref{eq:contradiction2}. Thus, the claim is proved. For $x = p+\bar{q}, p+\bar{q}+1, \ldots, q-1$, there are exactly $m-1$ solutions to~\eqref{eq:Ham4} by Lemma~\ref{lemma:simple}. Totally there are $m$ solutions, and thus $H_{g1}(\tau_1,\tau_2) = m$.

Finally suppose that \eqref{eq:Ham4} has exactly one solution for $0 \leq x < \tau_2$. As the number of solutions to~\eqref{eq:Ham4} for $x=\tau_2, \tau_2+1, \ldots, q-1$ is either $m-1$ or $m$ by Lemma~\ref{lemma:simple}, the total number of solutions to~\eqref{eq:Ham4} is either $m$ or $m+1$.

In any case, we see that $H_{g1}(\tau_1, \tau_2)$ is either $m$, $m+1$ or $m+2$.
\end{proof}

Thereom~\ref{thm:cross-correlation1} asserts that for any pair of distinct CRT sequences, the Hamming crosscorrelation is either between $m-1$ and $m+1$, or between $m$ and $m+2$. For the whole sequence set, the Hamming crosscorrelation is therefore four-valued. We next show that for some special choice of $q$, namely $q \equiv \pm 1 \bmod p$, the Hamming crosscorrelation of the whole sequence set assumes only three distinct values.

\begin{theorem} Let $p$ and $q$ be integers as in Theorem~\ref{thm:cross-correlation1}.

\begin{enumerate}

\item If $q$ is of the form $mp+1$ for some positive integer $m$, then for $g=2,3,\ldots, p-1$, $H_{g1}(\tau)$ is between $m-1$ and $m+1$.

\item If $q$ be of the form $mp+(p-1)$ for some positive integer $m$, then for $g = 2,3,\ldots, p-1$, $H_{g1}(\tau)$ is between $m$ and $m+2$.
\end{enumerate}
\label{thm:cross-correlation2}
\end{theorem}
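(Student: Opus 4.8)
The plan is to reduce both parts to direct applications of Theorem~\ref{thm:cross-correlation1} by pinning down exactly which values $b_g$ takes as $g$ runs over $\{2,3,\ldots,p-1\}$. Recall from~\eqref{eq:def_b} that $b_g = (g-1)^{-1}\bar{q}$ in $\mathbb{Z}_p$, and that Theorem~\ref{thm:cross-correlation1} places $H_{g1}(\tau)$ between $m-1$ and $m+1$ when $0 < b_g < p-\bar{q}$, and between $m$ and $m+2$ when $p-\bar{q} < b_g < p$. So the entire argument hinges on locating $b_g$ relative to the thresholds $0$, $p-\bar{q}$, and $p$. The key structural fact I would invoke is that $x \mapsto x^{-1}$ is a bijection of $\mathbb{Z}_p \setminus \{0\}$ that fixes $p-1 \equiv -1$, so that the inverses $(g-1)^{-1}$ sweep out a controllable set as $g$ varies.

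For part~1, substitute $\bar{q}=1$ (since $q=mp+1$), which gives $b_g=(g-1)^{-1}$. As $g$ ranges over $\{2,\ldots,p-1\}$, the quantity $g-1$ ranges over $\{1,2,\ldots,p-2\}$; applying the bijection above, whose only fixed point among nonzero residues relevant here is $p-1$, the inverses $(g-1)^{-1}$ fill out precisely $\{1,2,\ldots,p-2\}$. Hence $0<b_g<p-1=p-\bar{q}$ for every admissible $g$, and the first case of Theorem~\ref{thm:cross-correlation1} delivers $m-1 \le H_{g1}(\tau) \le m+1$.

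For part~2, substitute $\bar{q}=p-1\equiv -1$ (since $q=mp+(p-1)$), giving $b_g=-(g-1)^{-1}$. By the same computation, $(g-1)^{-1}$ sweeps $\{1,\ldots,p-2\}$, so $b_g=-(g-1)^{-1}$ sweeps $\{p-1,p-2,\ldots,2\}=\{2,3,\ldots,p-1\}$ in $\mathbb{Z}_p$. Thus $1=p-\bar{q}<b_g<p$ for all such $g$, and the second case of Theorem~\ref{thm:cross-correlation1} yields $m \le H_{g1}(\tau) \le m+2$.

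The reasoning is otherwise routine; the one point requiring care, and the only place an off-by-one slip could enter, is confirming that $b_g$ never lands on the excluded boundary values, since Theorem~\ref{thm:cross-correlation1} demands \emph{strict} inequalities. Concretely I must verify $b_g \ne 0$ and $b_g \ne p-1$ in part~1, and $b_g \ne 0$ and $b_g \ne 1$ in part~2. Each reduces to the observation that $g=1$ is excluded by hypothesis (so $g-1 \ne 0$ and $(g-1)^{-1}$ is well defined and nonzero), and that the residue $g-1=p-1$, whose inverse is again $p-1$, corresponds to $g\equiv 0$, which also lies outside the range $\{2,\ldots,p-1\}$. Checking these two exclusions is exactly what guarantees the strict containment the theorem requires, so it is the step I would treat most carefully even though the underlying arithmetic is trivial.
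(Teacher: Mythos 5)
Your proof is correct and follows essentially the same route as the paper: substitute $\bar{q}=1$ (resp.\ $\bar{q}=p-1$) into $b_g=(g-1)^{-1}\bar{q}$, note that $(g-1)^{-1}$ remains in $\{1,\ldots,p-2\}$ because inversion is a bijection of $\mathbb{Z}_p\setminus\{0\}$ fixing $p-1$, and then invoke the appropriate case of Theorem~\ref{thm:cross-correlation1}. The only difference is one of detail: you write out part~2 explicitly and check the strict boundary exclusions, whereas the paper disposes of part~2 with ``can be proved similarly.''
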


\begin{proof}
For he first part of the theorem, we have $\bar{q}$ equal to 1 mod~$p$.  So
\[
b_g \equiv (g-1)^{-1} \bar{q} \equiv (g-1)^{-1} \bmod p.
\]
Since $g$ is  between $2$ and $p-1$ inclusively, $g-1$ is between $1$ and $p-2$, and hence the inverse of $g-1$ mod $p$ is also between $1$ and $p-2$. We thus obtain
$ 0 < b_g < p-1 = p-\bar{q}$.
The result now follows from Theorem~\ref{thm:cross-correlation1}.

The second part can be proved similarly from Theorem~\ref{thm:cross-correlation1} by putting $\bar{q} = p-1$.
\end{proof}

Together with Theorem~\ref{thm:cross-correlation0}, which says that the Hamming crosscorrelation between $s_0(t)$ and $s_g(t)$, for $g\neq 0$, is either $m$ or $m+1$, we prove that the Hamming crosscorrelation of the whole CRT sequence set is three-valued when $q \equiv \pm 1 \bmod p$. The sequences in Example~1 are generated with $q \equiv -1 \bmod p$. We can verify that the Hamming crosscorrelation in Example~1 is either 1, 2 or~3.

\section{Lower Bound on System Throughput} \label{sec:bound}

The three-valued result in Theorem~\ref{thm:cross-correlation2} suggests that the variation of Hamming crosscorrelation due to relative delay offsets is minimal when $q \equiv \pm 1 \bmod p$. We single out the $q \equiv -1 \bmod p$ case below and derive a lower bound on the resulting system throughput. The  case of $q \equiv 1 \bmod p$ is similar and omitted.

When $q$ is of the form  $kp-1$ for some integer $k \geq 2$, the CRT construction  yields $p$ protocol sequences of length $L=kp^2-p$ and Hamming weight $kp-1$. By Theorem~\ref{thm:cross-correlation2}, the largest Hamming crosscorrelation value is $k+1$.

We pick $M$ sequences and form the CRT sequence set of size~$p$ in order to support $M$ users.
Here, $M$ is an integer  whose value will be optimized later. Since a user sends $kp-1$ packets in a period, and each other user may collide with him in at most $k+1$ packets, the number of successful packets per user per period is no less than $kp-1 -(M-1)(k+1)$. The total number of successful packets, summed over all $M$ users, is thus lower bounded by
\begin{equation}
M  [kp-1 - (M-1)(k+1)].  \label{eq:lower_bound}
\end{equation}
By the method of completing square, we can write~\eqref{eq:lower_bound} as
\begin{equation}
 (k+1) \left[ -\Big(M-\frac{k(p+1)}{2(k+1)}\Big)^2 + \Big(\frac{k(p+1)}{2(k+1)} \Big)^2 \right].
\end{equation}
We see that the maximum value in~\eqref{eq:lower_bound} is obtained when
\begin{equation}
 M^* = k(p+1)/(2(k+1)). \label{eq:best_M}
\end{equation}

Since $M$ must be an integer, after taking the floor of~\eqref{eq:best_M}, we obtain the following theorem.

\begin{theorem}
Let $p$ be a prime number, $k\geq 2$, and $M'$ be the largest integer smaller than or equal to $M^*$ in~\eqref{eq:best_M}.
By picking $M'$ sequences from the CRT construction with parameters $p$ and $q=kp-1$, the system throughput is lower bounded by
\begin{equation}
\frac{1}{p(kp-1)} \left[ \frac{(p+1)^2}{4} \cdot \frac{k^2}{k+1} - (k+1)\right] \label{eq:throughput}
\end{equation}
\label{thm:throughput}
\end{theorem}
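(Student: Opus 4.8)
The plan is to lower bound the total number of guaranteed successful packets per period and then divide by the period $L = p(kp-1)$. First I would record the collision count that precedes the statement: because $q = kp-1$ satisfies $q \equiv -1 \bmod p$, the second part of Theorem~\ref{thm:cross-correlation2} applies with $m = k-1$ and $\bar q = p-1$, so the Hamming crosscorrelation between any two distinct generators in $\{1,\dots,p-1\}$ is at most $m+2 = k+1$, while the crosscorrelations involving $s_0$ are no larger by Theorem~\ref{thm:cross-correlation0}. Consequently, for any choice of $M'$ sequences, a union bound over the remaining $M'-1$ users shows that each user keeps at least $(kp-1)-(M'-1)(k+1)$ collision-free slots, so the total guaranteed number of successful packets per period is at least $f(M') := M'[(kp-1)-(M'-1)(k+1)]$, which is~\eqref{eq:lower_bound} evaluated at $M = M'$.

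Next I would use the completed-square form already displayed, namely $f(M) = (k+1)\big[(M^*)^2 - (M-M^*)^2\big]$ with $M^* = k(p+1)/(2(k+1))$ from~\eqref{eq:best_M}. Over the reals this attains its maximum $(k+1)(M^*)^2 = k^2(p+1)^2/(4(k+1))$ at $M = M^*$, but since the number of users must be an integer I evaluate $f$ at $M' = \lfloor M^* \rfloor$. The entire substance of the theorem is then to show that rounding down from $M^*$ to $M'$ costs at most $k+1$ successful packets, which is where the correction term $-(k+1)$ in~\eqref{eq:throughput} comes from.

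The key estimate drops straight out of the completed-square form: since $M' = \lfloor M^* \rfloor$ we have $0 \le M^* - M' < 1$, hence $(M^* - M')^2 < 1$ and the rounding loss obeys $(k+1)(M^*-M')^2 < k+1$. This gives $f(M') > (k+1)(M^*)^2 - (k+1) = \tfrac{(p+1)^2}{4}\cdot\tfrac{k^2}{k+1} - (k+1)$, and dividing by $L = p(kp-1)$ reproduces~\eqref{eq:throughput} exactly. The hard part is really only this rounding bound, $(M^* - M')^2 < 1$, since everything else is either bookkeeping or already supplied by the crosscorrelation theorems; the one point worth checking carefully is that all pairwise crosscorrelations among the selected $M'$ sequences—including those with $s_0$—are genuinely bounded by $k+1$, which follows by combining Theorems~\ref{thm:cross-correlation0} and~\ref{thm:cross-correlation2}.
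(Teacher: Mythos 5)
Your proposal is correct and follows essentially the same route as the paper: assemble the per-user collision bound $k+1$ from Theorems~\ref{thm:cross-correlation0} and~\ref{thm:cross-correlation2}, use the completed-square form of $f(M)$, bound the rounding loss $f(M^*)-f(M')$ by $k+1$, and divide by the period $p(kp-1)$. In fact your key step $(k+1)(M^*-M')^2 \leq k+1$ is stated more carefully than in the paper, whose proof writes the rounding loss as $(k+1)(M'-M^*)$ without the square (an apparent typo, since that quantity is nonpositive).
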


\begin{proof}
Consider the expression in~\eqref{eq:lower_bound} as a function of $M$, and denote it by $f(M)$.
Since the difference between $M^*$ and $M'$ is at most one,
$f(M^*) - f(M') \leq (k+1)(M' - M^*) \leq k+1$.
After division by the period $p(kp-1)$, we have the following lower bound on system throughput,
$$
\frac{f(M')}{p(kp-1)} \geq \frac{f(M^*) - (k+1)}{p(kp-1)}
$$
which can be readily seen to be the same as~\eqref{eq:throughput}.
\end{proof}

We note that the value in~\eqref{eq:throughput} is approximately equal to 0.25 when $k$ and $p$ are large.

Theorem~\ref{thm:throughput} provides a hard guarantee on the worst-case system throughput; no matter what the delay offsets are, the system throughput is always larger than the value in~\eqref{eq:throughput}. Theorem~\ref{thm:throughput} also indicates a tradeoff between the the lower bound and the sequence period. If we increase the value of $k$, the sequence period is increased, but the lower bound on system throughput is also increased.

We remark that the lower bound in Theorem~\ref{thm:throughput} is not tight. The actual system throughput is higher than~\eqref{eq:throughput}. The next example compares the lower bound with the average throughput over relative delay offsets.

\begin{example} We consider an example with $M=19$ users, using CRT sequences with $p=37$ and $q=kp-1$. The throughput is plotted against the sequence period, while keeping the fraction of ones in each sequence fixed at $1/p$. This means that the fraction of time in which a user is transmitting, and hence the power of each user, is kept constant. We compare the lower bound in~\eqref{eq:throughput} with the average throughput obtained by simulation in Fig.~\ref{fig:simulate}. For each~$k$, 20000 delay offset combinations are randomly generated. The mean system throughput is about 0.31
In addition to the mean throughput, the maximum and minimum throughput obtained among these 20000 delay offset combinations are also plotted. The variation of throughput diminishes as sequence period increases. The value of the lower bound~\eqref{eq:throughput} also increases.
We see that the minimal observed system throughput is much higher than the lower bound in Theorem~\ref{thm:throughput}.

For the shift-invariant protocol sequence set for nineteen users, the sequence period is $19^{19}$, which is astronomical. Nevertheless, it has higher system throughput $e^{-1}=0.3679$. For wobbling sequences, a lower bound of 0.25 system throughput~\cite[(6.12)]{Wong07} is guaranteed when the sequence period is $19^4 \approx 1.3\times 10^5$. From Fig.~\ref{fig:simulate}, a lower bound of 0.25 can be obtained by using CRT sequences when the sequence period is about $1.1\times 10^4$, a roughly ten-fold reduction in sequence period.
\end{example}

\begin{figure}
\begin{center}
  \includegraphics[width=3.5in]{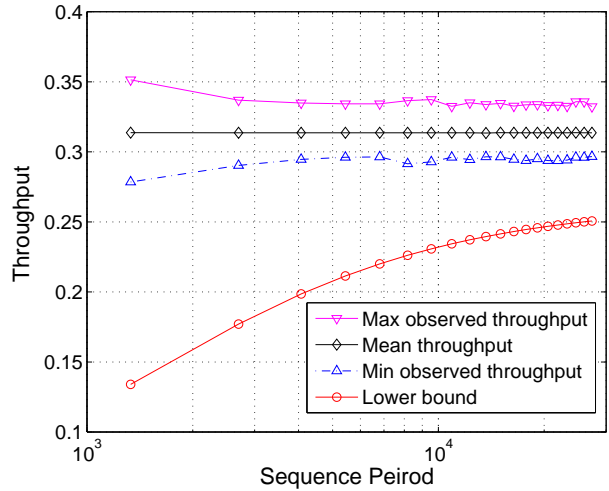}
\end{center}
\caption{System Throughput of CRT Sequences for $M=19$ users, $p=37$.}
\label{fig:simulate}
\end{figure}

\section{Conclusion}

In order to minimize the waiting time until a successful packet is sent, while maintaining a high level of system throughput,
a class of  protocol sequences with short period are constructed. After analyzing the crosscorrelation properties, we derive a lower bound on the system throughput. The constructed sequences provides flexibility on the tradeoff between sequence period and the worst-case system throughput.




\end{document}